\documentclass[oneside,reqno,english]{amsart}
\usepackage[T1]{fontenc}
\usepackage[latin9]{inputenc}
\usepackage{babel}
\usepackage{amstext}
\usepackage{amsthm}
\usepackage{amssymb}
\usepackage{esint}
\usepackage{url}

\makeatletter
\numberwithin{equation}{section}
\numberwithin{figure}{section}
\theoremstyle{plain}
\newtheorem{thm}{\protect\theoremname}
\theoremstyle{definition}
\newtheorem{defn}[thm]{\protect\definitionname}
\theoremstyle{plain}
\newtheorem{prop}[thm]{\protect\propositionname}
\theoremstyle{plain}
\newtheorem{lem}[thm]{\protect\lemmaname}

\providecommand{\definitionname}{Definition}
\providecommand{\lemmaname}{Lemma}
\providecommand{\propositionname}{Proposition}
\providecommand{\theoremname}{Theorem}

\makeatother
\begin{document}
\title{Clifford-wavelet Transform and the uncertainty principle}
\author{Hicham Banouh}

\address{Laboratoire AMNEDP, Facult\'e de Math\'ematiques, Universit\'e de Sciences et Technologie Houari Boumedienne, Bab Zouar, Alger, Algeria.}
\email{banouh\_hicham@hotmail.fr}
\author{Anouar ben mabrouk}
\address{Higher institut of Applied Mathematics and Computer Sciences, University of Kairaouan, Street of Assad Ibn Alfourat, Kairaouan 3100, Tunisia.\newline
and\newline
Laboratory of Algebra, Number Theory and Nonlinear Analysis, Department of Mathematics, Faculty of Sciences, Monastir, Tunisia.\newline
and\newline
Department of Mathematics, Faculty of Sciences, University of Tabuk, King Faisal Rd, Tabuk, Saudi Arabia.}
\email{anouar.benmabrouk@fsm.rnu.tn}
\author{Mohamed Kesri}
\address{D\'epartement d'analyse, Universit\'e de Sciences et Technologie Houari Boumedienne, Bab Zouar, Alger, Algeria.}
\email{mkes33dz@yahoo.fr}
\begin{abstract}
In this paper we derive a Heisenberg-type uncertainty principle for
the continuous Clifford wavelet transform. A brief review of Clifford algebra/analysis, wavelet transform on $\mathbb{R}$ and Clifford-Fourier
transform and their proprieties has been conducted. Next, such concepts have been applied to develop an uncertainty principle based on Clifford wavelets.
\end{abstract}
\keywords{Clifford algebra, Clifford analisis, Continuous wavelet transform, Clifford-Fourier transform, Clifford-wavelet transform, Uncertainty principle.}
\subjclass[2000]{30G35, 42C40, 42B10, 15A66.}
\maketitle
\section{Introduction}
Transformations such as the Fourier one are powerful methods for signals representations and features detection in signals. The signals are transformed from the original domain to the spectral or frequency one. In the frequency domain many characteristics of a signal are seen more clearly. Contrarily to the Fourier modes, wavelet basis functions are localized in both spatial and frequency domains and thus yield very sparse and well-structured representations of signals, important facts from a signal processing point
of view. The first work on wavelet analysis has been done by Morlet in \cite{Morlet1982} to study seismic waves. He also, with Grossman, gave a mathematical study of continuous wavelet transform (see \cite{Grossmann1984}). In \cite{Meyer1986}, Meyer recognised the link between harmonic analysis and Morlet's theory and gave a mathematical foundation to the continuous wavelet theory. The continuous-wavelet analysis of a square integrable function $f$
begins by a convolution with copies of a given ``mother wavelet'' $\psi$ translated and dilated respectively by $b\in\mathbb{R}$ and $a>0$. Such a function $\psi$ has to fulfil an admissibility condition which states that
\[
\mathcal{A}_{\psi}=\int_{\mathbb{R}}\frac{\left|\widehat{\psi}(\xi)\right|^{2}}{\left|\xi\right|}d\xi<+\infty,
\]
where $\widehat{\psi}$ is the classical Fourier transform of $\psi$.
More information on real wavelet can be found in \cite{Grossmann1986}
and \cite{Debnath2015} and a generalization to Sobolev spaces $\mathcal{H}^{s}(\mathbb{R})$ for an arbitrary real number $s$ in \cite{Rieder1990}.

On the other hand, Clifford analysis leads to the generalization of
real and harmonic analysis to higher dimensions. Clifford algebra
accurately treats geometric entities depending on their dimension
such as scalars, vectors, bivectors and volume elements, etc. The
distinction of axial and polar vectors in physics, e.g. is resolved
in the form of vectors and bivectors. For example, the quaternion
description of rotations is fully incorporated in the form of rotors.
With respect to the geometric product of vectors, division by non-zero
vectors is defined. Clifford analysis/algebra has started to take
place especially in signal and image processing (See for instance \cite{Brackx2001a}, \cite{Clifford1882}, \cite{DeSchepper2006}, \cite{Dirac1928}, \cite{Hamilton1844}, \cite{Hamilton1866}, \cite{Sommen2015}).

The present paper lies in the same topic of Clifford algebra/analysis
applications. We aim to develop an uncertainty principle proof in
the Clifford analysis framework based on Clifford wavelets.

The paper is organized as follows: In section 2 we give a brief review
of the Clifford analysis, introduce the notion of wavelet transform
in $\mathbb{R}$ and the uncertainty principle. The third section
is devoted to some results and properties of the
Clifford-Fourier transform. In section 4, Clifford-wavelet transform
has been investigated. In section 5, the uncertainty principle for
the Clifford wavelet transform is established.
\section{Preliminaries}
In this section, we aim to recall the basic properties of the Clifford
algebras (See \cite{Hamilton1844}, \cite{Sommen2015} and the references
therein). Next, a brief review of continuous wavelet transform
in $\mathbb{R}$ and the Heisenberg uncertainty principle are developed.
\subsection{Clifford Algebras}
The Clifford algebra $\mathbb{R}_{n}$ associated to $\mathbb{R}^{n}$
is an associative algebra generated by an orthonormal (the canonical)
basis $\left\{ e_{1},e_{2},\dots,e_{n}\right\} $ by means of a non
commutative product
\[
e_{i}e_{j}+e_{j}e_{i}=-2\delta_{ij},
\]
where $\delta$ is the Kronecker symbol. This yields a finite $2^{n}$-dimensional algebra known as the Clifford algebra $\mathbb{R}_{n}$. It is decomposed
as a direct sum
\[
\mathbb{R}_{n}=\mathbb{R}_{n}^{0}\oplus\mathbb{R}_{n}^{1}\oplus\cdots\oplus\mathbb{R}_{n}^{n},
\]
where $\mathbb{R}_{n}^{k}$ are the spaces of $k$-multi vectors defined
by
\[
\mathbb{R}_{n}^{k}=span_{\mathbb{R}}\left\{ e_{A}=e_{i_{1}i_{2}\dots i_{k}},1\leq i_{1}<i_{2}<\cdots<i_{k}\leq n\right\},
\]
where $e_{A}=e_{i_{1}i_{2}\dots i_{k}}=e_{i_{1}}e_{i_{2}}\cdots e_{i_{k}}$. We may also decompose $\mathbb{R}_{n}$ as two sub-algebras
\begin{align*}
	\mathbb{R}_{n}=\mathbb{R}_{n}^{+}\oplus\mathbb{R}_{n}^{-}=\bigoplus_{k\text{even}}\mathbb{R}_{n}^{k}\oplus\bigoplus_{k\text{odd}}
	\mathbb{R}_{n}^{k}
\end{align*}
called respectively the even and odd sub-algebras. Consequently, any Clifford number $a\in\mathbb{R}_{n}$ has a representation
of the form
\[
a=\sum_{A}a_{A}e_{A},\;a_{A}\in\mathbb{R},
\]
where $e_{A}=e_{i_{1}i_{2}\cdots i_{k}}=e_{i_{1}}e_{i_{2}}\cdots e_{i_{k}}$,
with $1\leq i_{1}<i_{2}<\cdots<i_{k}\leq n$, $e_{\emptyset}=1$ and $A=(i_1,i_2,\dots,i_k)$. Denoting $|A|$ the length or the cardinality for the multi-index $A$, the element $a$ may be written as
\[
a=\sum_{k=0}^{n}\sum_{\left|A\right|=k}a_{A}e_{A},
\]
On the algebra $\mathbb{R}_{n}$, we may introduce some involutive
operators such as
\begin{itemize}
	\item \textbf{Main-involution}: $\widetilde{e_{j}}=-e_{j}$, $\forall j$,
	which yiels that $\widetilde{e_{A}}=(-1)^{\left|A\right|}e_{A}$ and
	consequently, $\widetilde{ab}=\widetilde{a}\widetilde{b}$, $\forall a,b\in\mathbb{R}_{n}$.
	\item \textbf{Reversion}: $e_{j}^{\ast}=e_{j}$, $\forall j$, which in turns
	yields that $e_{A}^{\ast}=(-1)^{\frac{\left|A\right|(\left|A\right|-1)}{2}}e_{A}$
	and thus $(ab)^{\ast}=b^{\ast}a^{\ast}$, $\forall a,b\in\mathbb{R}_{n}$.
	\item \textbf{Conjugation:} $\overline{e_{j}}=-e_{j}$, $\forall j$, yielding
	that $\overline{e_{A}}=(-1)^{\frac{|A|(|A|+1)}{2}}e_{A}$ and consequently,
	$\overline{ab}=\overline{b}\overline{a}$, $\forall a,b\in\mathbb{R}_{n}$.
\end{itemize}
The concept of real Clifford algebra can be extended to the complex
Clifford algebra $\mathbb{C}_{n}=\mathbb{R}_{n}+i\mathbb{R}_{n}$.
An element $\lambda\in\mathbb{C}_{n}$ may be written on the form $\lambda=\sum_{A}\lambda_{A}e_{A},\,\lambda_{A}\in\mathbb{C}$
and thus possesses the decomposition $\lambda=a+ib$, $a,b\in\mathbb{R}_{n}$.
This induces the
\begin{itemize}
	\item \textbf{Hermitian conjugation} $\lambda^{\dagger}=\overline{a}-i\overline{b}$.
\end{itemize}
In this context, a vector $x=(x_{1},x_{2},\dots,x_{n})\in\mathbb{R}^{n}$
may be identified to the Clifford element in $\mathbb{R}_{n}$, $\underline{x}={\displaystyle \sum_{j=1}^{n}x_{j}e_{j}}$.
This permits to define the Clifford product of two vectors by
$$
\underline{x}\underline{y}=\underline{x}\cdot\underline{y}+\underline{x}\wedge\underline{y},
$$
where the $\cdot$ product is
$$
\underline{x}\cdot\underline{y}=-<\underline{x},\underline{y}>=-{\displaystyle \sum_{j=1}^{n}x_{j}y_{j}},
$$
the classical inner product on $\mathbb{R}^{n}$ and where the $\wedge$
product is the outer product $\underline{x}\wedge\underline{y}={\displaystyle \sum_{j<k}e_{j}e_{k}(x_{j}y_{k}-x_{k}y_{j})}$. This yields that
\[
%\begin{cases}
\underline{x}\cdot\underline{y}=\frac{1}{2}(\underline{x}\underline{y}+\underline{y}\underline{x})\;\;\hbox{and}\;\;
\underline{x}\wedge\underline{y}=\frac{1}{2}(\underline{x}\underline{y}-\underline{y}\underline{x}).
%\end{cases}
\]
In particular we have
\[
\underline{x}^{2}=-\left|\underline{x}\right|^{2}=-\sum_{j=1}^{n}|x_{j}|^{2}.
\]
Any vector $\underline{x}$ is decomposed as $\underline{x}=\underline{x}_{\parallel\underline{\omega}}+\underline{x}_{\perp\underline{\omega}}$
for a $\underline{\omega}\in\mathcal{S}^{n-1}$ with $\underline{\omega}\cdot\underline{x}_{\perp\underline{\omega}}=0$
and $\underline{\omega}\wedge\underline{x}_{\parallel\underline{\omega}}=0$,
which in turns induces that $\underline{x}_{\parallel\underline{\omega}}=\left\langle \underline{x},\underline{\omega}\right\rangle \underline{\omega}$
and $\underline{x}_{\perp\underline{\omega}}=\underline{\omega}\left(\underline{x}\wedge\underline{\omega}\right)$.
This permits next to characterize the reflection $R_{\underline{\omega}}$
with respect to the hyperplane $\underline{\omega}^{\perp}$ as $R_{\underline{\omega}}(\underline{x})=\underline{\omega}\underline{x}\underline{\omega}$. Cartan-Dieudonné Theorem (\cite{Cartan1966}) relates the reflection
to the so-called spinors. It states that there exists elements $\underline{\omega}_{1},\underline{\omega}_{2},\dots,\underline{\omega}_{2l}\in\mathcal{S}^{n-1}$ with $\underline{\omega}_{j}^{2}=-1,1\leq j\leq2l$ and a rotation $T\in SO(n)$ such that
\[
T(\underline{x})=\left[R_{\underline{\omega}_{1}}\circ R_{\underline{\omega}_{2}}\circ......\circ R_{\underline{\omega}_{2l}}\right](\underline{x})=\underline{\omega}_{1}\underline{\omega}_{2},......\underline{\omega}_{2l}\underline{x}\,\underline{\omega}_{2l}\underline{\omega}_{2l-1},......\underline{\omega}_{2}\underline{\omega}_{1}.
\]
Denoting next $s=\underline{\omega}_{1}\underline{\omega}_{2},......\underline{\omega}_{2l}$
and $\overline{s}=\underline{\omega}_{2l}\underline{\omega}_{2l-1},......\underline{\omega}_{2}\underline{\omega}_{1}$, we have $T(\underline{x})=s\underline{x}\overline{s}$. The element
$s$ is called a spinor. Generally speaking, the spin group of order $n$ is
\[
Spin(n)=\left\{ s\in\mathbb{R}_{n};\;s={\displaystyle \prod\limits _{j=1}^{2l}}\underline{\omega}_{j},\;\underline{\omega}_{j}^{2}=-1,1\leq j\leq2l\right\}.
\]
To finish with this brief overview on Clifford algebra/analysis,
it remains to recall the functional framework. Let $f:\mathbb{R}^{n}\longrightarrow\mathbb{C}_{n}$. It may be expressed as
\[
f(\underline{x})=\sum_{A}f_{A}(\underline{x})e_{A},
\]
where $f_{A}$ are generally $\mathbb{C}$-valued functions and $A\subset\left\{ 1,2,\cdots,n\right\} $. The inner product of two functions, $f=\displaystyle\sum_{A}f_{A}(\underline{x})e_{A}$
and $g=\displaystyle\sum_{B}g_{B}(\underline{x})e_{B}$ is defined by
\begin{equation}\label{eq:inner product}
<f,g>_{L^{2}(\mathbb{R}_{n},dV(\underline{x}))} =\int_{\mathbb{R}^{n}}\left[f(\underline{x})\right]^{\dagger}g(\underline{x})dV(\underline{x})
\end{equation}
and the associated norm by
$$
\|f\|_{L^{2}(\mathbb{R}_{n},dV(\underline{x}))} =<f,f>_{L^{2}(\mathbb{R}_{n},dV(\underline{x}))}^{\frac{1}{2}}.
$$
We denote also
\[
\left\Vert f\right\Vert _{L^{1}(\mathbb{R}_{n},dV(\underline{x}))}=\int_{\mathbb{R}^{n}}|f(\underline{x})|dV(\underline{x})
\]
where $dV(\underline{x})$ stands for the Lebesgue measure. The inner product (\ref{eq:inner product}) satisfies the Cauchy-Schwartz
inequality
\begin{equation}
\left|<f,g>_{L^{2}(\mathbb{R}_{n},dV(\underline{x}))}\right|\leq\left\Vert f\right\Vert _{L^{2}(\mathbb{R}_{n},dV(\underline{x}))}\left\Vert g\right\Vert _{L^{2}(\mathbb{R}_{n},dV(\underline{x}))}\label{cauchy schwartz}
\end{equation}
\subsection{Wavelets on $\mathbb{R}$}
Wavelet analysis of functions is based as the Fourier one on some
type of transform known as wavelet transform which consists in some
product and/or projection of the function on suitable windows issued
from one source analyzing function called mother wavelet and which
plays the role of the Fourier mode $e^{ix}$. Denote $\psi$ such
function. It should satisfy
\begin{itemize}
	\item A finite energy or space/time localization assumption: $\psi\in L^{2}(\mathbb{R},dx)$.
	\item An admissibility assumption stating that
	\[
	0<\mathcal{A}_{\psi}=\displaystyle \int_{\mathbb{R}}\frac{(\widehat{\psi}(\xi))^{2}}{\left\vert \xi\right\vert }d\xi<\infty,
	\]
	where $\widehat{\psi}(\xi)=\displaystyle\int_{\mathbb{R}}\psi(x)e^{-ix\xi}dx$
	is the classical Fourier transform of $\psi$.
	\item Vanishing moments:
	\[
	\exists N\in\mathbb{N},\int\limits _{\mathbb{R}}x^{n}\psi(x)dx=0,\forall n\leq N.
	\]
\end{itemize}
\begin{defn}
	(Continuous-wavelet transform) Let $f\in{L}^{2}(\mathbb{R},dx)$.
	Its continuous wavelet transform is defined as
	\[
	T_{\psi}[f](a,b)=\int\limits _{\mathbb{R}}\overline{\psi^{a,b}(x)}f(x)dx=(\widetilde{\psi_{a}}\ast f)(b),
	\]
	where $\psi^{a,b}(x)=\frac{1}{\sqrt{a}}\psi(\frac{x-b}{a})$, $(a>0,\,b\in\mathbb{R})$
	and $\widetilde{\psi_{a}}(x)=\frac{1}{\sqrt{a}}\overline{\psi(\frac{-x}{a})}.$
\end{defn}
Using the admissibility assumption above, we immediately affirm that
\[
\int_{\mathbb{R}^{\ast}}\int_{\mathbb{R}}\left\vert T_{\psi}[f](a,b)\right\vert ^{2}db\frac{da}{a^{2}}<\infty.
\]
More precisely, an inner product may be defined for the wavelet transforms
as
\[
\left[T_{\psi}\left[f\right],T_{\psi}\left[g\right]\right]=\frac{1}{\mathcal{A}_{\psi}}\int\limits _{\mathbb{R}}\int\limits _{\mathbb{R}}\overline{T_{\psi}\left[f\right](a,b)}T_{\psi}\left[g\right](a,b)\frac{da}{a^{2}}db,
\]
which in turns may be related to the inner product of the analyzed functions $f$ and $g$ by means of a Fourier-Plancherel type formula
as
\[
\left[T_{\psi}\left[f\right],T_{\psi}\left[g\right]\right]=<f,g>_{{L}^{2}(\mathbb{R},dx)}.
\]
Moreover, a reconstruction formula may also be induced yielding that
\[
f(x)=\frac{1}{\mathcal{A}_{\psi}}\int\limits _{\mathbb{R}}(\int\limits _{\mathbb{R}^{+}}\psi^{a,b}(x)T_{\psi}\left[f\right](a,b)\frac{da}{a^{2}})db.
\]

\subsection{Uncertainty Principle}

The uncertainty principle also known as Heisenberg's uncertainty principle
discovered in 1927 by Heisenberg in \cite{Heisenberg1927} is certainly
one of the most famous and important concepts of quantum mechanics. It plays an important role in the development and understanding of quantum physics. The physical origin of uncertainty principle is related to quantum systems and states that: the determination of positions by performing measurement on the system disturbs it sufficiently to make the determination of momentum imprecise and vice-versa.

Mathematically, in wave mechanics, the uncertainty relation between
position and momentum arises because the expressions of the wave function
in the two corresponding orthonormal bases in Hilbert space are Fourier
transforms of one another (i.e., position and momentum are conjugate
variables). Hence, a non-zero function and its Fourier transform cannot
both be sharply localized. The next theorem formally summarizes the
Heisenberg's Uncertainty Principle
\begin{thm}\label{uncer cas general} (Uncertainty Principle \cite{Weyl1950}) Let $A$ and $B$ be two self-adjoint operators on a Hilbert space $\mathcal{X}$ with domains $\mathcal{D}(A)$ and $\mathcal{D}(B)$ respectively and denote finally $\left[A,B\right]=AB-BA$ their commutator. Then
	\begin{equation}
	\left\Vert Af\right\Vert _{2}\left\Vert Bf\right\Vert _{2}\geq\displaystyle\frac{1}{2}\left\vert <\left[A,B\right]f,f>\right\vert ,\forall f\in\mathcal{D}(\left[A,B\right]).
	\end{equation}
\end{thm}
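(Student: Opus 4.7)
The plan is to follow the classical Weyl proof, which only uses self-adjointness, linearity of the inner product, and the Cauchy--Schwarz inequality (\ref{cauchy schwartz}). The proof is essentially two algebraic identities followed by two inequalities, so I do not expect any deep obstacle; the only subtlety is keeping the domain bookkeeping honest, since $\mathcal{D}([A,B])$ is precisely the set on which all four compositions $ABf$, $BAf$, and the inner products below are well defined.

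First I would expand the commutator inside the inner product:
\[
<[A,B]f,f>\;=\;<ABf,f>\;-\;<BAf,f>.
\]
Using that $A$ and $B$ are self-adjoint on their respective domains, and that $f\in\mathcal{D}([A,B])$ ensures $Bf\in\mathcal{D}(A)$ and $Af\in\mathcal{D}(B)$, I would rewrite this as
\[
<[A,B]f,f>\;=\;<Bf,Af>\;-\;<Af,Bf>\;=\;<Bf,Af>\;-\;\overline{<Bf,Af>}.
\]
This identifies $<[A,B]f,f>$ as purely imaginary, equal to $2i\,\mathrm{Im}<Bf,Af>$.

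Next I would take absolute values and use $|\mathrm{Im}\,z|\le |z|$ to obtain
\[
\left|<[A,B]f,f>\right|\;\le\;2\left|<Af,Bf>\right|.
\]
Finally, applying the Cauchy--Schwarz inequality (\ref{cauchy schwartz}) to the pair $Af,Bf\in\mathcal{X}$ yields $|<Af,Bf>|\le\|Af\|_{2}\|Bf\|_{2}$, and combining the two inequalities gives the desired bound
\[
\|Af\|_{2}\|Bf\|_{2}\;\ge\;\tfrac{1}{2}\left|<[A,B]f,f>\right|.
\]
The only step where care is needed is the first identity, which is exactly where the assumption $f\in\mathcal{D}([A,B])$ is consumed; everything else is formal manipulation of a Hilbert-space inner product and is automatic. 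Hence I do not anticipate a substantial obstacle beyond stating the domain hypothesis cleanly.
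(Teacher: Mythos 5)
Your argument is correct: it is the standard Weyl--Robertson proof, and the key identity $<[A,B]f,f>=<Bf,Af>-\overline{<Bf,Af>}=2i\,\mathrm{Im}<Bf,Af>$ together with $|\mathrm{Im}\,z|\le|z|$ and Cauchy--Schwarz gives exactly the stated bound, with the domain hypothesis $f\in\mathcal{D}([A,B])$ consumed precisely where you say it is. There is nothing to compare against in the paper itself: Theorem \ref{uncer cas general} is quoted from \cite{Weyl1950} without proof, so your write-up supplies a proof the paper omits. One cosmetic point: since the paper's inner product (\ref{eq:inner product}) is conjugate-linear in the \emph{first} argument, you should check that your self-adjointness manipulations are written consistently with that convention (the identity $<ABf,f>=<Bf,Af>$ and the conclusion are unaffected, only the placement of the conjugate changes).
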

\section{Clifford-Fourier Transform}
In this section we propose to review some basic concepts of the Clifford-Fourier transform. Fore more details we may refer to \cite{Fu2015} and \cite{Jday2018}. The Clifford-Fourier transform of a Clifford-valued function $f\in L^{1}\cap L^{2}(\mathbb{R}_{n},dV(\underline{x}))$
is
\[
\mathcal{F}\left[f\right](\underline{\xi})=\widehat{f}(\underline{\xi})=\frac{1}{(2\pi)^{\frac{n}{2}}}\int_{\mathbb{R}^{n}}e^{-i<\underline{x},\underline{\xi}>}f(\underline{x})dV(\underline{x}).
\]
It is an invertible transform and its inverse is
\[
\mathcal{F}^{-1}\left[f\right](\underline{x})=\frac{1}{(2\pi)^{\frac{n}{2}}}\int_{\mathbb{R}^{n}}e^{i<\underline{x},\underline{\xi}>}\widehat{f}(\underline{\xi})dV(\underline{\xi}).
\]
In the sequel, we shall use the two operators
$$
A_{k}f(\underline{x})=x_{k}f(\underline{x})\;\;\hbox{and}\;\;
B_{k}f(\underline{x})=\partial_{x_{k}}f(\underline{x}),\;k=1,2,\cdots,n.
$$
Using Theorem \ref{uncer cas general}, we obtain
\[
\left\Vert A_{k}f\right\Vert _{L^{2}(\mathbb{R}_{n},dV(\underline{x}))}\left\Vert B_{k}f\right\Vert _{L^{2}(\mathbb{R}_{n},dV(\underline{x}))}\geq\frac{1}{2}\left\vert <\left[A_{k},B_{k}\right]f,f>\right\vert,
\]
which reads otherwise as
\begin{equation}\label{uncer clifford-fourier}
\|x_{k}f\|_{L^{2}(\mathbb{R}_{n},dV(\underline{x}))}\|\xi_{k}\widehat{f}\|_{L^{2}(\mathbb{R}_{n},dV(\underline{x}))}\geq\frac{1}{2}\|f\|_{L^{2}(\mathbb{R}_{n},dV(\underline{x}))}^{2}.
\end{equation}
\section{Clifford-Wavelet Transform}
In this section we introduce the concept of the Clifford-wavelet transform and some of its important properties to be used later. In this context, a function $\psi\in L^{1}\cap L^{2}(\mathbb{R}^{n},dV(\underline{x}))$ will be considered as a Clifford mother wavelet. To join the admissibility assumptions in
the real case, here-also we assume that
\begin{itemize}
	\item $\widehat{\psi}(\underline{\xi})\left[\widehat{\psi}(\underline{\xi})\right]^{\dagger}$ is scalar.
	\item $\mathcal{A}_{\psi}=\displaystyle(2\pi)^{n}\int_{\mathbb{R}^{n}}\frac{\widehat{\psi}(\underline{\xi})\left[\widehat{\psi}(\underline{\xi})\right]^{\dagger}}{|\underline{\xi}|^{n}}dV(\underline{\xi})<\infty.$
\end{itemize}
For $(a,\underline{b},s)\in\mathbb{R}^{+}\times\mathbb{R}^{n}\times Spin(n)$,
we denote
\[
\psi^{a,\underline{b},s}(\underline{x})=\frac{1}{a^{\frac{n}{2}}}s\psi(\frac{\overline{s}(\underline{x}-\underline{b})s}{a})\overline{s}.
\]
It holds in fact that these copies are also admissible and that
\[
\mathcal{A}_{\psi^{a,\underline{b},s}}=\frac{a^{n/2}}{(2\pi)^{n}}\mathcal{A}_{\psi}<\infty.
\]
\begin{prop}
	The set $\left\{\psi^{a,\underline{b},s},a>0,\underline{b}\in\mathbb{R}^{n},\,s\in Spin(n)\right\}$ is dense in the space $L^{2}(\mathbb{R}^{n},dV(\underline{x}))$.
\end{prop}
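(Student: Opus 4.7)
Density is equivalent to triviality of the orthogonal complement, so it suffices to show that if $f\in L^{2}(\mathbb{R}^{n},dV(\underline{x}))$ satisfies $\langle\psi^{a,\underline{b},s},f\rangle=0$ for every $(a,\underline{b},s)\in\mathbb{R}^{+}\times\mathbb{R}^{n}\times Spin(n)$, then $f=0$. Fix $a>0$ and $s\in Spin(n)$ and set $\phi^{a,s}(\underline{x}):=\psi^{a,\underline{b},s}(\underline{x})|_{\underline{b}=0}$, so that $\psi^{a,\underline{b},s}(\underline{x})=\phi^{a,s}(\underline{x}-\underline{b})$. The hypothesis then says the Clifford-valued convolution $(h^{a,s}\ast f)(\underline{b})$ vanishes identically in $\underline{b}$, where $h^{a,s}(\underline{y}):=[\phi^{a,s}(-\underline{y})]^{\dagger}$. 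This reduction from a three-parameter orthogonality to a one-parameter convolution identity is the first step.

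Since the Fourier kernel $e^{-i\langle\underline{x},\underline{\xi}\rangle}$ is a complex scalar and commutes with Clifford values, the usual convolution/product rule still holds with the order preserved on the left, yielding $\widehat{h^{a,s}}(\underline{\xi})\,\widehat{f}(\underline{\xi})=0$ for a.e.\ $\underline{\xi}$. A direct change of variables, using that $\underline{y}\mapsto s\underline{y}\bar{s}$ is an orthogonal rotation with inverse $\underline{\xi}\mapsto\bar{s}\underline{\xi}s$ and that $s^{\dagger}=\bar{s}$ for a spinor, produces the explicit formula $\widehat{h^{a,s}}(\underline{\xi})=a^{n/2}\,s\,\widehat{\psi}(a\bar{s}\underline{\xi}s)^{\dagger}\bar{s}$. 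Left-multiplying the vanishing identity by $\bar{s}\widehat{\psi}(a\bar{s}\underline{\xi}s)$ and invoking the first admissibility hypothesis, namely that $\widehat{\psi}\widehat{\psi}^{\dagger}$ is scalar, collapses everything to the pointwise relation $(\widehat{\psi}\widehat{\psi}^{\dagger})(a\bar{s}\underline{\xi}s)\,\bar{s}\,\widehat{f}(\underline{\xi})=0$ a.e., for every admissible $(a,s)$.

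The integral admissibility condition forces $\widehat{\psi}\widehat{\psi}^{\dagger}$ to be nonzero on a set of positive Lebesgue measure (otherwise $\mathcal{A}_{\psi}=0$). Since $Spin(n)$ double-covers $SO(n)$ and therefore acts transitively on $\mathcal{S}^{n-1}$, the map $(a,s)\mapsto a\bar{s}\underline{\xi}s$ sweeps out all of $\mathbb{R}^{n}\setminus\{0\}$ for any fixed $\underline{\xi}\neq 0$, and a Fubini-type argument supplies, for a.e.\ $\underline{\xi}$, at least one $(a,s)$ with $(\widehat{\psi}\widehat{\psi}^{\dagger})(a\bar{s}\underline{\xi}s)\neq 0$. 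Cancelling this nonzero scalar and left-multiplying by $s$ gives $\widehat{f}(\underline{\xi})=0$ a.e., and Plancherel for the Clifford--Fourier transform concludes $f=0$. The genuine difficulty is non-commutativity: the spinors $s$ and $\bar{s}$ sandwich $\widehat{f}(\underline{\xi})$ and cannot be moved at will, and it is precisely the scalar-valuedness of $\widehat{\psi}\widehat{\psi}^{\dagger}$ that makes the reduction to a scalar multiplier possible; verifying that the orbits under $(a,s)$ cover enough of Fourier space is the secondary obstacle.
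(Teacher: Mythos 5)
Your proof follows essentially the same route as the paper: reduce density to triviality of the orthogonal complement, pass to the Clifford--Fourier side so that the $\underline{b}$-dependence becomes a vanishing (inverse) Fourier transform forcing the integrand $s\left[\widehat{\psi}(a\bar{s}\underline{\xi}s)\right]^{\dagger}\bar{s}\,\widehat{f}(\underline{\xi})$ to vanish a.e., and then use that $\left\{a\bar{s}\underline{\xi}s : a>0,\ s\in Spin(n)\right\}$ sweeps out $\mathbb{R}^{n}\setminus\{0\}$ together with admissibility to conclude $\widehat{f}=0$. You are in fact more careful than the paper at the one delicate point --- Clifford algebras have zero divisors, so $\widehat{\psi}(u)^{\dagger}\widehat{f}(\underline{\xi})=0$ with $\widehat{\psi}(u)\neq 0$ does not by itself yield $\widehat{f}(\underline{\xi})=0$ --- which you resolve by left-multiplying to produce the scalar $\widehat{\psi}\widehat{\psi}^{\dagger}$; the only slip is that the left multiplier should be $\widehat{\psi}(a\bar{s}\underline{\xi}s)\,\bar{s}$ rather than $\bar{s}\,\widehat{\psi}(a\bar{s}\underline{\xi}s)$, so that $\bar{s}s=1$ cancels before the scalar condition is invoked.
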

\begin{proof}
	Let $f$ be an analyzed function such that
	\[
	<\psi^{a,\underline{b},s},f>_{L^{2}(\mathbb{R}^{n},dV(\underline{x}))}=0,\;\forall a>0,\,\underline{b}\in\mathbb{R}^{n}\text{ and }s\in Spin(n).
	\]
	We shall prove that $f=0$. Due to the Parseval identity of the Clifford-Fourier
	transform, we obtain
	\[
	<\psi^{a,\underline{b},s},f>_{L^{2}(\mathbb{R}^{n},dV(\underline{x}))}=<\widehat{\psi^{a,\underline{b},s}},\widehat{f}>_{L^{2}(\mathbb{R}^{n},dV(\underline{x}))}=0.
	\]
	Since,
	\[
	<\widehat{\psi^{a,\underline{b},s}},\widehat{f}>_{L^{2}(\mathbb{R}^{n},dV(\underline{x}))}=a^{\frac{n}{2}}\int_{\mathbb{R}^{n}}e^{i<\underline{b},\underline{\xi}>}\left[\widehat{\psi}(a\overline{s}\underline{\xi}s)\right]^{\dagger}\widehat{f}(\underline{\xi})dV(\underline{\xi}),
	\]
	then,
	\[
	s\left[\widehat{\psi}(a\overline{s}\underline{\xi}s)\right]^{\dagger}s\widehat{f}(\underline{\xi})=0.
	\]
	Recall now that for a fixed $\underline{\xi}\not=0$ in $\mathbb{R}^{n}$,
	\[
	\left\{ a\overline{s}\underline{\xi}s,a>0\text{ and }s\in Spin(n)\right\} =\mathbb{R}^{n}.
	\]
	It results that
	\[
	\widehat{f}=0\quad and\quad f\equiv0.
	\]
\end{proof}
\begin{defn} (Clifford Wavelet Transform) The Clifford-wavelet transform of an analyzed function $f$ with respect to the mother wavelet $\psi$ is
	\[
	T_{\psi}\left[f\right](a,\underline{b},s)=\int_{\mathbb{R}^{n}}\left[\psi^{a,\underline{b},s}(\underline{x})\right]^{\dagger}f(\underline{x})dV(\underline{x}).
	\]
\end{defn}
\begin{defn}
	(Inner product relation) Let $\mathcal{H}_{\psi}=\left\{ T_{\psi}\left[f\right],\;f\in L^{2}(\mathbb{R}^{n},dV(\underline{x}))\right\} $
	be the image of $L^{2}(\mathbb{R}_{n},dV(\underline{x}))$ relatively to the operator $T_{\psi}$. We define the inner product by
	\[
	\left[T_{\psi}\left[f\right],T_{\psi}\left[g\right]\right]=\frac{1}{\mathcal{A}_{\psi}}\int\limits _{Spin(n)}\int\limits _{\mathbb{R}^{n}}\int\limits _{\mathbb{R}^{+}}(T_{\psi}\left[f\right](a,\underline{b},s))^{\dagger}T_{\psi}\left[g\right](a,\underline{b},s)\frac{da}{a^{n+1}}dV(\underline{b})ds,
	\]
	where $ds$ stands for the Haar measure on $Spin(n)$.
\end{defn}
\begin{prop}
	$T_{\psi}:L^{2}(\mathbb{R}^{n},dV(\underline{x}))\longrightarrow\mathcal{H}_{\psi}$
	is an isometry.
\end{prop}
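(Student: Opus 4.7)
The plan is to establish the polarized identity $[T_\psi[f],T_\psi[g]] = \langle f,g\rangle_{L^2(\mathbb{R}^n,dV(\underline{x}))}$, from which isometry follows by setting $g=f$. The approach is a Plancherel--Fubini calculation that reduces everything to the Clifford--Fourier side, where the admissibility integral appears naturally.

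First I would reuse the computation already carried out in the proof of the preceding proposition, namely
\[
T_\psi[f](a,\underline{b},s)=\langle \widehat{\psi^{a,\underline{b},s}},\widehat{f}\rangle_{L^2(\mathbb{R}^n,dV(\underline{x}))}=a^{n/2}\int_{\mathbb{R}^n}e^{i\langle \underline{b},\underline{\xi}\rangle}\bigl[\widehat{\psi}(a\overline{s}\underline{\xi}s)\bigr]^{\dagger}\widehat{f}(\underline{\xi})\,dV(\underline{\xi}).
\]
Thus, for fixed $(a,s)$, the map $\underline{b}\mapsto T_\psi[f](a,\underline{b},s)$ is a constant multiple of the inverse Clifford--Fourier transform of $F_{a,s}(\underline{\xi}):=[\widehat{\psi}(a\overline{s}\underline{\xi}s)]^{\dagger}\widehat{f}(\underline{\xi})$. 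Applying Parseval in the variable $\underline{b}$ then yields
\[
\int_{\mathbb{R}^n}\bigl(T_\psi[f](a,\underline{b},s)\bigr)^{\dagger}T_\psi[g](a,\underline{b},s)\,dV(\underline{b})=a^{n}(2\pi)^{n}\int_{\mathbb{R}^n}\widehat{f}(\underline{\xi})^{\dagger}\widehat{\psi}(a\overline{s}\underline{\xi}s)\bigl[\widehat{\psi}(a\overline{s}\underline{\xi}s)\bigr]^{\dagger}\widehat{g}(\underline{\xi})\,dV(\underline{\xi}).
\]

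Next I would invoke the admissibility hypothesis that $\widehat{\psi}(\underline{\eta})[\widehat{\psi}(\underline{\eta})]^{\dagger}$ is scalar-valued. This is exactly what is needed to commute this factor past $\widehat{f}(\underline{\xi})^{\dagger}$ and pull it outside the Clifford-valued product, producing the scalar kernel $K(\underline{\eta}):=\widehat{\psi}(\underline{\eta})[\widehat{\psi}(\underline{\eta})]^{\dagger}$. Inserting this into the triple integral defining $[T_\psi[f],T_\psi[g]]$ and applying Fubini gives
\[
[T_\psi[f],T_\psi[g]]=\frac{(2\pi)^{n}}{\mathcal{A}_\psi}\int_{\mathbb{R}^n}\widehat{f}(\underline{\xi})^{\dagger}\widehat{g}(\underline{\xi})\Biggl[\int_{Spin(n)}\!\!\int_{0}^{\infty}K(a\overline{s}\underline{\xi}s)\,\frac{da}{a}\,ds\Biggr]dV(\underline{\xi}).
\]

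For the inner bracket I would use the fact, already recalled in the previous proof, that for each fixed $\underline{\xi}\neq 0$ the map $(a,s)\mapsto a\overline{s}\underline{\xi}s$ sweeps out $\mathbb{R}^n\setminus\{0\}$. Writing spherical coordinates $\underline{\eta}=r\underline{\omega}$ with $r>0$ and $\underline{\omega}\in S^{n-1}$, and using that $Spin(n)$ acts transitively on $S^{n-1}$ through $s\cdot\underline{\xi}_0=\overline{s}\underline{\xi}_0 s$ with Haar measure pushing forward to normalized surface measure, the change of variables $r=a|\underline{\xi}|$ transforms the inner bracket into $\int_{\mathbb{R}^n}K(\underline{\eta})/|\underline{\eta}|^{n}\,dV(\underline{\eta})=\mathcal{A}_\psi/(2\pi)^{n}$. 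Substituting back cancels both constants and leaves $\int_{\mathbb{R}^n}\widehat{f}(\underline{\xi})^{\dagger}\widehat{g}(\underline{\xi})\,dV(\underline{\xi})=\langle\widehat{f},\widehat{g}\rangle=\langle f,g\rangle$ by a final application of Parseval.

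The main obstacle I anticipate is the bookkeeping around the non-commutative Clifford product: the scalar-valuedness hypothesis on $\widehat{\psi}\widehat{\psi}^{\dagger}$ is exactly what makes the Plancherel step legitimate, and one must be careful not to disturb the order of $\widehat{f}^{\dagger}$ and $\widehat{g}$. A secondary point is fixing a normalization of the Haar measure $ds$ on $Spin(n)$ consistent with the spherical disintegration so that the constant $c_n$ coming out of the $(a,s)$-integration equals exactly $1$; this is implicit in the paper's definition of $\mathcal{A}_\psi$ and the constants appearing in $[\cdot,\cdot]$.
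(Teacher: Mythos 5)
Your proposal follows essentially the same route as the paper's own proof: express $T_\psi[f](a,\cdot,s)$ as an (inverse) Clifford--Fourier transform of $[\widehat{\psi}(a\overline{s}\,\underline{\xi}\,s)]^{\dagger}\widehat{f}(\underline{\xi})$, apply Parseval in $\underline{b}$, use Fubini, and evaluate the $(a,s)$-integral of the kernel via spherical coordinates and the transitive $Spin(n)$-action to recover $\mathcal{A}_\psi/(2\pi)^n$. If anything, you are more explicit than the paper about two points it glosses over --- the role of the scalar-valuedness of $\widehat{\psi}\,\widehat{\psi}^{\dagger}$ in commuting the kernel past the Clifford factors, and the Haar-measure normalization needed for the constant to come out exactly right --- so no changes are needed.
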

\begin{proof}
	We have to show that
	\begin{equation}
	\left[T_{\psi}\left[f\right],T_{\psi}\left[g\right]\right]=<f,g>_{L^{2}(\mathbb{R}^{n},dV(\underline{x}))}.\label{iso clifford ond}
	\end{equation}
	Put
	\[
	\begin{cases}
	\Phi_{\psi}(a,s,\underline{\xi})\left[f\right](-\underline{b})
	&\left[\left[\widehat{\psi}(a\overline{s}\underline{\xi}s)\right]^{\dagger}\overline{s}\widehat{f}(\underline{\xi})\right](-\underline{b})\\
	\Phi_{\psi}(a,s,\underline{\xi})\left[g\right](-\underline{b})&=\left[\left[\widehat{\psi}(a\overline{s}\underline{\xi}s)\right]^{\dagger}\overline{s}\widehat{g}(\underline{\xi})\right](-\underline{b}).
	\end{cases}
	\]
	Hence,
	\[
	\begin{cases}
	T_{\psi}\left[f\right](a,\underline{b},s) & =a^{\frac{n}{2}}s(2\pi)^{\frac{n}{2}}\widehat{\Phi_{\psi}(a,\underline{\xi},s)}\left[f\right](-\underline{b}),\\
	\\
	T_{\psi}\left[g\right](a,\underline{b},s) & ==a^{\frac{n}{2}}s(2\pi)^{\frac{n}{2}}\widehat{\Phi_{\psi}(a,\underline{\xi},s)}\left[g\right](-\underline{b}).
	\end{cases}
	\]
	Applying Parseval formula, we get
	\[
	\left\langle \widehat{\Phi_{\psi}(a,\bullet,s)\left[f\right]},\widehat{\Phi_{\psi}(a,\bullet,s)\left[g\right]}\right\rangle =\left\langle \Phi_{\psi}(a,\bullet,s)\left[f\right],\Phi_{\psi}(a,\bullet,s)\left[g\right]\right\rangle .
	\]
	Next,
	\begin{align*}
		&\left[T_{\psi}\left[f\right],T_{\psi}\left[g\right]\right]\\ &=\frac{1}{(2\pi)^{n}\mathcal{A}_{\psi}}\intop_{Spin(n)}\int\limits_{\mathbb{R}^{+}}\left\{\int_{\mathbb{R}^{n}}(\Phi_{\psi}(a,\underline{\xi},s)\left[f\right](\underline{\xi}))^{\dagger}\Phi_{\psi}(a,\underline{\xi},s)\left[g\right](\underline{\xi})dV(\underline{b})\right\} \frac{da}{a}ds\\
		&=\frac{1}{(2\pi)^{n}\mathcal{A}_{\psi}}\intop_{Spin(n)}\int\limits_{\mathbb{R}^{+}}\left\{\int_{\mathbb{R}^{n}}\left[(\left[\widehat{\psi}(a\overline{s}\underline{\xi}s)\right]^{\dagger}\overline{s}\widehat{f}(\underline{\xi})\right]^{\dagger}\left[\widehat{\psi}(a\overline{s}\underline{\xi}s)\right]^{\dagger}\overline{s}\widehat{g}(\underline{\xi})dV(\underline{\xi})\right\} \frac{da}{a}ds\\
		&=\frac{1}{(2\pi)^{n}\mathcal{A}_{\psi}}\int_{Spin(n)}\int\limits_{\mathbb{R}^{+}}\left\{\int_{\mathbb{R}^{n}}\left[\widehat{f}(\underline{\xi})\right]^{\dagger}s\widehat{\psi}(a\overline{s}\underline{\xi}s)\left[\widehat{\psi}(a\overline{s}\underline{\xi}s)\right]^{\dagger}\overline{s}\widehat{g}(\underline{\xi})dV(\underline{\xi})\right\} \frac{da}{a}ds\\
		&=\frac{1}{(2\pi)^{n}\mathcal{A}_{\psi}}\int_{Spin(n)}\int\limits_{\mathbb{R}^{+}}\left\{\int_{\mathbb{R}^{n}}\left[\widehat{f}(\underline{\xi})\right]^{\dagger}s\widehat{\psi}(a\overline{s}\underline{\xi}s)\left[\widehat{\psi}(a\overline{s}\underline{\xi}s)\right]^{\dagger}\overline{s}\widehat{g}(\underline{\xi})dV(\underline{\xi})\right\} \frac{da}{a}ds\\
		&=\frac{1}{(2\pi)^{n}\mathcal{A}_{\psi}}\int_{\mathbb{R}^{n}}\left[\widehat{f}(\underline{\xi})\right]^{\dagger}\left\{ \int_{Spin(n)}\int\limits _{\mathbb{R}^{+}}s\widehat{\psi}(a\overline{s}\underline{\xi}s)\left[\widehat{\psi}(a\overline{s}\underline{\xi}s)\right]^{\dagger}\overline{s}\frac{da}{a}ds\right\} \widehat{g}(\underline{\xi})dV(\underline{\xi}).
	\end{align*}
	Observing now that
	\begin{equation}\label{eq:Cpsi shit}
	\int_{Spin(n)}\int\limits_{\mathbb{R}^{+}}s\widehat{\psi}(a\overline{s}\underline{\xi}s)\left[\widehat{\psi}(a\overline{s}\underline{\xi}s)\right]^{\dagger}\overline{s}\frac{da}{a}ds=\frac{\mathcal{A}_{\psi}}{(2\pi)^{n}},
	\end{equation}
	we get immediately
	\[
	\left[T_{\psi}\left[f\right],T_{\psi}\left[g\right]\right]
	=\frac{1}{(2\pi)^{n}\mathcal{A}_{\psi}}\int_{\mathbb{R}^{n}}\left[\widehat{f}(\underline{\xi})\right]^{\dagger}\left\{\int_{\mathcal{S}^{n-1}}\int\limits _{\mathbb{R}^{+}}\Gamma(t,\nu)dS(\underline{\nu})\frac{da}{t}\right\} \widehat{g}(\underline{\xi})dV(\underline{\xi}),
	\]
	where we denoted $\Gamma(t,\nu)=\widehat{\psi}(t\underline{\nu})\left[\widehat{\psi}(t\underline{\nu})\right]^{\dagger}$. Otherwise, by taking $\underline{u}=t\underline{\nu}$, we obtain
	\begin{align}
		\left[T_{\psi}\left[f\right],T_{\psi}\left[g\right]\right] &=\frac{1}{(2\pi)^{n}\mathcal{A}_{\psi}}\int_{\mathbb{R}^{n}}\left[\widehat{f}(\underline{\xi})\right]^{\dagger}\left\{\int_{\mathbb{R}^{n}}\frac{\widehat{\psi}(\underline{u})\left[\widehat{\psi}(\underline{u})\right]^{\dagger}}{\left\vert \underline{u}\right\vert ^{n}}dV(\underline{u})\right\} \widehat{g}(\underline{\xi})dV(\underline{\xi})\nonumber \\
		&=\int_{\mathbb{R}^{n}}\left[\widehat{f}(\underline{\xi})\right]^{\dagger}\widehat{g}(\underline{\xi})dV(\underline{\xi})\nonumber \\
		&=<\widehat{f},\widehat{g}>\nonumber \\
		&=<f,g>.\label{clifford WT isom}
	\end{align}
	As a result we have
	\begin{equation}
	\int\limits _{Spin(n)}\int\limits _{\mathbb{R}^{n}}\int\limits _{\mathbb{R}^{+}}(T_{\psi}\left[f\right](a,\underline{b},s))^{2}\frac{da}{a^{n+1}}dV(\underline{b})ds=\mathcal{A}_{\psi}\left\Vert f\right\Vert _{2}^{2}\label{eq:norme equality}
	\end{equation}
\end{proof}
As a result of the last Proposition and as in the real case, we have
here a Clifford-wavelet reconstruction formula.
\begin{prop}
	For all $f\in L^{2}(\mathbb{R}^{n},dV(\underline{x}))$ we have
	\[
	f(\underline{x})=\frac{1}{A_{\psi}}\int\limits _{Spin(n)}\int\limits _{\mathbb{R}^{n}}\int\limits _{\mathbb{R}^{+}}\psi^{a,\underline{b},s}(\underline{x})T_{\psi}\left[f\right](a,\underline{b},s)\frac{da}{a^{n+1}}dV(\underline{b})ds.
	\]
\end{prop}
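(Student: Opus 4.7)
The natural strategy is to show that the triple integral on the right-hand side, call it $\tilde f(\underline x)$, weakly agrees with $f$ in the sense that $\langle \tilde f,h\rangle = \langle f,h\rangle$ for every $h\in L^{2}(\mathbb{R}^{n},dV(\underline x))$, and then invoke nondegeneracy of the Clifford inner product to conclude $\tilde f = f$ almost everywhere. This sidesteps any attempt to justify the reconstruction pointwise and lets us reduce everything to the inner-product identity \eqref{iso clifford ond} that has already been proved.

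Concretely, the plan is as follows. First I would fix an arbitrary test function $h\in L^{2}(\mathbb{R}^{n},dV(\underline x))$ and compute $\langle \tilde f,h\rangle$. Taking the Hermitian conjugate inside the integral defining $\tilde f$ and using $(ab)^{\dagger}=b^{\dagger}a^{\dagger}$ gives
\[
[\tilde f(\underline x)]^{\dagger}=\frac{1}{\mathcal{A}_{\psi}}\int_{Spin(n)}\int_{\mathbb{R}^{n}}\int_{\mathbb{R}^{+}}\bigl(T_{\psi}[f](a,\underline b,s)\bigr)^{\dagger}\bigl[\psi^{a,\underline b,s}(\underline x)\bigr]^{\dagger}\frac{da}{a^{n+1}}dV(\underline b)\,ds.
\]
Multiplying by $h(\underline x)$, integrating in $\underline x$, and swapping the $\underline x$-integral with the $(a,\underline b,s)$-integrals via Fubini, the inner $\underline x$-integral collapses to $T_{\psi}[h](a,\underline b,s)$ by the very definition of the Clifford wavelet transform. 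Thus
\[
\langle \tilde f,h\rangle=\frac{1}{\mathcal{A}_{\psi}}\int_{Spin(n)}\int_{\mathbb{R}^{n}}\int_{\mathbb{R}^{+}}\bigl(T_{\psi}[f](a,\underline b,s)\bigr)^{\dagger}T_{\psi}[h](a,\underline b,s)\frac{da}{a^{n+1}}dV(\underline b)\,ds=\bigl[T_{\psi}[f],T_{\psi}[h]\bigr],
\]
which, by the isometry relation \eqref{iso clifford ond} of the preceding proposition, equals $\langle f,h\rangle$.

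Because $\langle \tilde f - f,h\rangle = 0$ for every $h\in L^{2}(\mathbb{R}^{n},dV(\underline x))$, the scalar-component nondegeneracy of the inner product defined in \eqref{eq:inner product} forces $\tilde f = f$ in $L^{2}$. The only genuinely technical step is the Fubini interchange in the $(a,\underline b,s,\underline x)$ integral: the integrand is Clifford-valued, so one has to argue component-wise, controlling each scalar component by $|\widehat{\psi}|$-weighted estimates against $\|f\|_{2}\|h\|_{2}$, exactly as in the derivation of \eqref{eq:Cpsi shit}. Apart from that bookkeeping, and the careful application of the identity $(ab)^{\dagger}=b^{\dagger}a^{\dagger}$ (which is where one has to be vigilant, since Clifford multiplication is noncommutative), the proof is a short duality argument built on the isometry already established.
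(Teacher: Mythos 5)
Your duality argument is correct and is precisely the standard way to make rigorous what the paper merely asserts: the paper gives no proof of this proposition at all, stating only that it is ``a result of the last Proposition,'' i.e.\ of the isometry identity $\left[T_{\psi}[f],T_{\psi}[g]\right]=\langle f,g\rangle$. Your computation --- pairing the right-hand side against an arbitrary $h$, using $(ab)^{\dagger}=b^{\dagger}a^{\dagger}$ and Fubini to collapse the $\underline{x}$-integral to $T_{\psi}[h](a,\underline{b},s)$, invoking the isometry, and concluding by nondegeneracy of the scalar part of the inner product --- is exactly the intended route, with the honest caveat (which you correctly flag) that the reconstruction integral is thereby interpreted in the weak sense.
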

\section{Clifford wavelet uncertainty principle}
In this section, we established the Heisenberg uncertainty principle for the Clifford wavelet transform. Backgrounds may be found in \cite{Bahri2011b}.  \begin{thm}\label{MainTheorem}
	Let $\psi\in L^{2}(\mathbb{R}^{n},dV(\underline{x}))$ an admissible
	Clifford mother wavelet. Then for $f\in L^{2}(\mathbb{R}^{n},dV(\underline{x}))$
	the following inequality holds
	\begin{equation}
	\displaystyle\left(\displaystyle\int_{Spin(n)}\displaystyle\int_{\mathbb{R}^{+}}\left\Vert b_{k}T\left[f\right](a,\cdot,s)\right\Vert _{2}^{2}\frac{da}{a^{n+1}}ds\right)^{\frac{1}{2}}\left\Vert \xi_{k}\widehat{f}\right\Vert _{2}\geq\frac{(2\pi)^{\frac{n}{2}}}{2}\sqrt{A_{\psi}}\left\Vert f\right\Vert _{2}^{2},
	\end{equation}
	where $k=1,2,\cdots,n$.
\end{thm}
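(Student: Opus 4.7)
The plan is to apply the scalar Heisenberg inequality~\eqref{uncer clifford-fourier} to the partial function $\underline{b}\mapsto T_\psi[f](a,\underline{b},s)$ for each fixed pair $(a,s)$, then integrate the resulting pointwise inequality against the measure $\frac{da}{a^{n+1}}\,ds$. Directly from~\eqref{uncer clifford-fourier} applied along the $k$-th coordinate one obtains
\[
\|b_k T_\psi[f](a,\cdot,s)\|_2\cdot\|\xi_k\widehat{T_\psi[f](a,\cdot,s)}\|_2\;\geq\;\tfrac{1}{2}\|T_\psi[f](a,\cdot,s)\|_2^2,
\]
where the hat denotes the Clifford-Fourier transform in the variable $\underline{b}$.

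Integrating this pointwise inequality over $\mathbb{R}^+\times Spin(n)$, the right-hand side collapses, thanks to the isometry identity~\eqref{eq:norme equality}, to $\tfrac{1}{2}A_\psi\|f\|_2^2$. On the left-hand side I would apply the Cauchy--Schwarz inequality to the product of the two $(a,s)$-dependent norms, which separates the integrated left-hand side into two factors: the factor $\bigl(\int\int\|b_k T_\psi[f]\|_2^2\frac{da}{a^{n+1}}ds\bigr)^{1/2}$ already appearing in the statement, and an auxiliary factor $\bigl(\int\int\|\xi_k\widehat{T_\psi[f](a,\cdot,s)}\|_2^2\frac{da}{a^{n+1}}ds\bigr)^{1/2}$.

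It remains to identify this auxiliary factor with a constant multiple of $\|\xi_k\widehat{f}\|_2$. For this I would use the representation
\[
T_\psi[f](a,\underline{b},s)=a^{n/2}(2\pi)^{n/2}\, s\,\widehat{\Phi_\psi(a,\cdot,s)[f]}(-\underline{b})
\]
already exploited inside the proof of the isometry, so that after a Clifford-Fourier transform in $\underline{b}$ one recovers $(2\pi)^{n/2}a^{n/2}\,s\,[\widehat{\psi}(a\overline{s}\underline{\xi}s)]^\dagger\overline{s}\widehat{f}(\underline{\xi})$. Squaring this expression and invoking $(\overline{s})^\dagger=s$ together with the scalarity of $\widehat{\psi}[\widehat{\psi}]^\dagger$, the conjugation $s(\cdot)\overline{s}$ disappears, and Fubini reduces the inner $(a,s)$-integral precisely to the integrand of~\eqref{eq:Cpsi shit}. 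Substituting the value $A_\psi/(2\pi)^n$ then leaves $\|\xi_k\widehat{f}\|_2^2$, up to the expected prefactor; rearranging the combined inequality produces the announced estimate.

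The main technical obstacle is the careful bookkeeping of spinor conjugations and of the $(2\pi)$-powers coming from the Clifford-Fourier transform. Concretely, the scalarity hypothesis on $\widehat{\psi}[\widehat{\psi}]^\dagger$ must be used to commute that quantity past $\overline{s}$ so that $s\overline{s}=1$ absorbs the conjugation and renders~\eqref{eq:Cpsi shit} directly applicable under the $\underline{\xi}$-integral; the $(2\pi)^{n/2}$-factor on the right-hand side of the statement should then track back to the $(2\pi)^{n/2}$ that appears when Clifford-Fourier transforming $T_\psi[f]$ with respect to $\underline{b}$. Beyond this, the remainder of the argument is a routine combination of Cauchy--Schwarz and Fubini.
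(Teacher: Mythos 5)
Your proposal follows essentially the same route as the paper: apply the Clifford--Fourier uncertainty inequality \eqref{uncer clifford-fourier} to $\underline{b}\mapsto T_\psi[f](a,\underline{b},s)$ for fixed $(a,s)$, integrate against $\frac{da}{a^{n+1}}ds$, separate the left side by Cauchy--Schwarz, collapse the right side via the isometry identity \eqref{eq:norme equality}, and evaluate the auxiliary factor by exactly the computation the paper isolates as Lemma \ref{MainResultLemma} (squaring $\widehat{T[f]}(a,\underline{\xi},s)=a^{n/2}\overline{s}[\widehat{\psi}(a\overline{s}\underline{\xi}s)]^\dagger s\widehat{f}(\underline{\xi})$, using scalarity to cancel the spinor conjugation, and invoking \eqref{eq:Cpsi shit}). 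The argument is correct, up to the same $(2\pi)$-normalization bookkeeping that the paper itself handles loosely.
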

To prove this result we need the following lemma :
\begin{lem}\label{MainResultLemma}
	\[
	\displaystyle\intop_{Spin(n)}\displaystyle\int_{\mathbb{R}^{+}}\displaystyle\int_{\mathbb{R}^{n}}\left\{\xi_{k}\widehat{T\left[f\right]}(a,\underline{\xi},s)\right\}^{2}dV(\underline{\xi})\frac{da}{a^{n+1}}ds=\frac{A_{\psi}}{(2\pi)^{n}}\left\Vert \xi_{k}\widehat{f}\right\Vert _{2}^{2}.
	\]
\end{lem}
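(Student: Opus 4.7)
The plan is to proceed in direct analogy with the isometry computation carried out in the proof of the previous proposition, inserting an extra factor of $\xi_{k}$ at the appropriate stage so that what was a Plancherel identity for $T_{\psi}[f]$ becomes a weighted Plancherel identity.

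First I will use the identification
$$T_{\psi}[f](a,\underline{b},s)=a^{\frac{n}{2}}(2\pi)^{\frac{n}{2}}\,s\,\widehat{\Phi_{\psi}(a,\cdot,s)[f]}(-\underline{b}),\qquad \Phi_{\psi}(a,s,\underline{\xi})[f]=\bigl[\widehat{\psi}(a\overline{s}\underline{\xi}s)\bigr]^{\dagger}\overline{s}\,\widehat{f}(\underline{\xi}),$$
already extracted in the isometry proof. Taking the Clifford--Fourier transform in the variable $\underline{b}$ and invoking the inversion formula will yield the closed form
$$\widehat{T_{\psi}[f]}(a,\underline{\xi},s)=a^{\frac{n}{2}}(2\pi)^{\frac{n}{2}}\,s\,\bigl[\widehat{\psi}(a\overline{s}\underline{\xi}s)\bigr]^{\dagger}\overline{s}\,\widehat{f}(\underline{\xi}).$$

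Second, multiplying by the scalar $\xi_{k}$ and forming the Clifford squared modulus via the Hermitian dagger, using $\overline{s}s=1$ for $s\in Spin(n)$, will give
$$\bigl|\xi_{k}\,\widehat{T_{\psi}[f]}(a,\underline{\xi},s)\bigr|^{2}=a^{n}(2\pi)^{n}\xi_{k}^{2}\,\bigl[\widehat{f}(\underline{\xi})\bigr]^{\dagger}\,s\,\widehat{\psi}(a\overline{s}\underline{\xi}s)\bigl[\widehat{\psi}(a\overline{s}\underline{\xi}s)\bigr]^{\dagger}\overline{s}\,\widehat{f}(\underline{\xi}).$$
I will then integrate against $dV(\underline{\xi})\,\tfrac{da}{a^{n+1}}\,ds$, noting that $a^{n}/a^{n+1}=1/a$ leaves precisely the measure $\tfrac{da}{a}\,ds$ demanded by the identity (\ref{eq:Cpsi shit}). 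By Fubini (legitimate because $f\in L^{2}$ and $\psi$ is admissible) the $\underline{\xi}$-integration moves outside; the inner $(a,s)$ double integral collapses by (\ref{eq:Cpsi shit}) to the scalar $\mathcal{A}_{\psi}/(2\pi)^{n}$; and what remains is exactly $\xi_{k}^{2}[\widehat{f}]^{\dagger}\widehat{f}$ integrated against $dV(\underline{\xi})$, which is the claimed right-hand side (up to the stated normalization).

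The main obstacle is not analytical but structural: one must carefully track the noncommutative Clifford product together with the spinor conjugations $s,\overline{s}$, the Hermitian dagger, and the $(2\pi)^{n/2}$ constants produced by Fourier inversion, in order to obtain the product $s\,\widehat{\psi}(a\overline{s}\underline{\xi}s)[\widehat{\psi}(a\overline{s}\underline{\xi}s)]^{\dagger}\overline{s}$ in exactly the form required to invoke (\ref{eq:Cpsi shit}). The scalarity hypothesis $\widehat{\psi}[\widehat{\psi}]^{\dagger}\in\mathbb{R}$ built into the admissibility condition is essential at this step, since it is what allows the bracket in (\ref{eq:Cpsi shit}) to be commuted past $\widehat{f}$ and the integrand to reduce to $\xi_{k}^{2}|\widehat{f}(\underline{\xi})|^{2}$.
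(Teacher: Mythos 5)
Your proposal follows the paper's own proof step for step: write $\widehat{T_{\psi}[f]}(a,\underline{\xi},s)$ explicitly via the Fourier transform of $\psi^{a,\underline{b},s}$, form the Hermitian square, cancel the spinors using $s\overline{s}=1$, apply Fubini so that the $(a,s)$-integration collapses through the identity (\ref{eq:Cpsi shit}), and read off a weighted Plancherel formula. The one point you must not leave as ``up to the stated normalization'' is the constant. Your formula
\[
\widehat{T_{\psi}[f]}(a,\underline{\xi},s)=a^{\frac{n}{2}}(2\pi)^{\frac{n}{2}}\,s\bigl[\widehat{\psi}(a\overline{s}\underline{\xi}s)\bigr]^{\dagger}\overline{s}\,\widehat{f}(\underline{\xi})
\]
carries a factor $(2\pi)^{\frac{n}{2}}$ that the paper's equation (\ref{fourier d'ondelette clifford}) omits; carried through your own computation, the left-hand side of the lemma evaluates to $(2\pi)^{n}\cdot\frac{\mathcal{A}_{\psi}}{(2\pi)^{n}}\|\xi_{k}\widehat{f}\|_{2}^{2}=\mathcal{A}_{\psi}\|\xi_{k}\widehat{f}\|_{2}^{2}$, which is $(2\pi)^{n}$ times the right-hand side as printed. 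Note that with the paper's unitary normalization of the Clifford--Fourier transform, applying Plancherel in $\underline{b}$ to (\ref{eq:norme equality}) (formally the case $\xi_{k}\equiv1$) also forces the constant $\mathcal{A}_{\psi}$ rather than $\mathcal{A}_{\psi}/(2\pi)^{n}$, so your version of the intermediate formula is the internally consistent one and the discrepancy lies in the lemma's stated constant. You should therefore commit to one normalization and state the constant you actually obtain, since this constant propagates directly into the uncertainty inequality of Theorem \ref{MainTheorem}; the order of $s$ and $\overline{s}$, on the other hand, is immaterial here because $\widehat{\psi}\,[\widehat{\psi}]^{\dagger}$ is assumed scalar.
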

\begin{proof}
	As $\widehat{\psi^{a,\underline{b},s}}(\underline{\xi})=a^{\frac{n}{2}}e^{-i<\underline{b},\underline{\xi}>}s\widehat{\psi}(a\overline{s}\underline{\xi}s)\overline{s}$,
	we get
	\[
	T\left[f\right](a,\underline{b},s)=a^{\frac{n}{2}}\mathcal{F}^{-1}\left[\overline{s}\left[\widehat{\psi}(a\overline{s}\cdot s)\right]^{\dagger}s\widehat{f}(\cdot)\right](\underline{b})
	\]
	and
	\begin{equation}
	\widehat{T\left[f\right]}(a,\underline{\xi},s)=a^{\frac{n}{2}}\overline{s}\left[\widehat{\psi}(a\overline{s}\underline{\xi}s)\right]^{\dagger}s\widehat{f}(\underline{\xi}).\label{fourier d'ondelette clifford}
	\end{equation}
	Therefore,
	\begin{align}
		&\displaystyle\intop_{\mathbb{R}^{n}}\left\{\xi_{k}\widehat{T\left[f\right]}(a,\underline{\xi},s)\right\} ^{2}dV(\underline{\xi})\\ &=\intop_{\mathbb{R}^{n}}\left\{\xi_{k}a^{\frac{n}{2}}\overline{s}\left[\widehat{\psi}(a\overline{s}\underline{\xi}s)\right]^{\dagger}s\widehat{f}(\underline{\xi})\right\} ^{2}dV(\underline{\xi})\nonumber \\		 &=\intop_{\mathbb{R}^{n}}\xi_{k}a^{\frac{n}{2}}\overline{s}\left[\widehat{\psi}(a\overline{s}\underline{\xi}s)\right]^{\dagger}s\widehat{f}(\underline{\xi})\left\{\xi_{k}a^{\frac{n}{2}}\overline{s}\left[\widehat{\psi}(a\overline{s}\underline{\xi}s)\right]^{\dagger}s\widehat{f}(\underline{\xi})\right\} ^{\dagger}dV(\underline{\xi})\nonumber \\
		&=\intop_{\mathbb{R}^{n}}\xi_{k}^{2}a^{n}\overline{s}\left[\widehat{\psi}(a\overline{s}\underline{\xi}s)\right]^{\dagger}s\widehat{f}(\underline{\xi})\left\{ \overline{s}\left[\widehat{\psi}(a\overline{s}\underline{\xi}s)\right]^{\dagger}s\widehat{f}(\underline{\xi})\right\} ^{\dagger}dV(\underline{\xi})\nonumber \\
		&=\intop_{\mathbb{R}^{n}}\xi_{k}^{2}a^{n}\overline{s}\left[\widehat{\psi}(a\overline{s}\underline{\xi}s)\right]^{\dagger}s\widehat{f}(\underline{\xi})\left[\widehat{f}(\underline{\xi})\right]^{\dagger}\overline{s}\widehat{\psi}(a\overline{s}\underline{\xi}s)sdV(\underline{\xi})\nonumber \\
		&=\intop_{\mathbb{R}^{n}}\xi_{k}^{2}a^{n}\left[\widehat{\psi}(a\overline{s}\underline{\xi}s)\right]^{\dagger}\widehat{f}(\underline{\xi})\left[\widehat{f}(\underline{\xi})\right]^{\dagger}\widehat{\psi}(a\overline{s}\underline{\xi}s)dV(\underline{\xi})\nonumber \\
		&=\intop_{\mathbb{R}^{n}}\xi_{k}^{2}a^{n}\left\{ \left[\widehat{\psi}(a\overline{s}\underline{\xi}s)\right]^{\dagger}\widehat{\psi}(a\overline{s}\underline{\xi}s)\right\} \left\{ \widehat{f}(\underline{\xi})\left[\widehat{f}(\underline{\xi})\right]^{\dagger}\right\} dV(\underline{\xi}).\label{xi X fourierT(f)}
	\end{align}
	Using of (\ref{xi X fourierT(f)}) we obtain
	\begin{align*}
		&\intop_{Spin(n)}\intop_{\mathbb{R}^{+}}\intop_{\mathbb{R}^{n}}\left\{ \xi_{k}\widehat{T\left[f\right]}(a,\underline{\xi},s)\right\} ^{2}dV(\underline{\xi})\frac{da}{a^{n+1}}ds\\ &=\intop_{Spin(n)}\intop_{\mathbb{R}^{+}}\intop_{\mathbb{R}^{n}}\xi_{k}^{2}a^{n}\left\{\left[\widehat{\psi}(a\overline{s}\underline{\xi}s)\right]^{\dagger}\widehat{\psi}(a\overline{s}\underline{\xi}s)\right\} \left\{\widehat{f}(\underline{\xi})\left[\widehat{f}(\underline{\xi})\right]^{\dagger}\right\} dV(\underline{\xi})\frac{da}{a^{n+1}}ds\\
		&=\intop_{Spin(n)}\intop_{\mathbb{R}^{+}}\intop_{\mathbb{R}^{n}}\xi_{k}^{2}\left\{\left[\widehat{\psi}(a\overline{s}\underline{\xi}s)\right]^{\dagger}\widehat{\psi}(a\overline{s}\underline{\xi}s)\right\}\left\{\widehat{f}(\underline{\xi})\left[\widehat{f}(\underline{\xi})\right]^{\dagger}\right\} dV(\underline{\xi})\frac{da}{a}ds\\
		&=\intop_{\mathbb{R}^{n}}\left\{\intop_{Spin(n)}\intop_{\mathbb{R}^{+}}\frac{\left[\widehat{\psi}(a\overline{s}\underline{\xi}s)\right]^{\dagger}\widehat{\psi}(a\overline{s}\underline{\xi}s)}{a}dads\right\}\xi_{k}^{2}\left\{\widehat{f}(\underline{\xi})\left[\widehat{f}(\underline{\xi})\right]^{\dagger}\right\} dV(\underline{\xi}).
	\end{align*}
	According to (\ref{eq:Cpsi shit}), we get in fin
	\[
	\displaystyle\intop_{Spin(n)}\intop_{\mathbb{R}^{+}}\intop_{\mathbb{R}^{n}}\left\{\xi_{k}\widehat{T\left[f\right]}(a,\underline{\xi},s)\right\}^{2}dV(\underline{\xi})\frac{da}{a^{n+1}}ds=\frac{A_{\psi}}{(2\pi)^{n}}\left\Vert\xi_{k}\widehat{f}\right\Vert _{2}^{2}.
	\]
\end{proof}
\begin{proof} \textit{of Theorem \ref{MainTheorem}.}
	Using (\ref{uncer clifford-fourier}) and setting $\underline{x}=\underline{b}\in\mathbb{R}^{n}$, we deduce that
	\[
	\left\Vert b_{k}T\left[f\right](a,\cdot,s)\right\Vert _{2}\left\Vert \xi_{k}\widehat{T\left[f\right]}(a,\cdot,s)\right\Vert_{2}\geq\frac{1}{2}\left\Vert T\left[f\right](a,\cdot,s)\right\Vert_{2}^{2}.
	\]
	Therefore
	$$
	\int_{Spin(n)}\int_{\mathbb{R}^{+}}\left\Vert b_{k}T\left[f\right](a,\cdot,s)\right\Vert _{2}\left\Vert \xi_{k}\widehat{T\left[f\right]}(a,\cdot,s)\right\Vert _{2}\frac{da}{a^{n+1}}ds
	$$
	$$
	\qquad\qquad\qquad\qquad\qquad\qquad\geq\frac{1}{2}\int_{Spin(n)}\int_{\mathbb{R}^{+}}\left\Vert T\left[f\right](a,\cdot,s)\right\Vert _{2}^{2}\frac{da}{a^{n+1}}ds
	$$
	According to the Cauchy-Schwartz inequality (\ref{cauchy schwartz}),
	it follows that
	\begin{equation*}
		\int_{Spin(n)\times\mathbb{R}^{+}}\left\Vert b_{k}T\left[f\right](a,\cdot,s)\right\Vert _{2}^{2}\frac{da}{a^{n+1}}ds\times\int_{Spin(n)\times\mathbb{R}^{+}}\left\Vert \xi_{k}\widehat{T\left[f\right]}(a,\cdot,s)\right\Vert _{2}^{2}\frac{da}{a^{n+1}}ds
	\end{equation*}
	\begin{equation}\label{uncer trop complique}
	\geq\left(\frac{1}{2}\int_{Spin(n)\times\mathbb{R}^{+}\times\mathbb{R}^{n}}\left\{ T\left[f\right](a,\underline{b},s)\right\} ^{2}dV(\underline{b})\frac{da}{a^{n+1}}ds\right)^{2}.
	\end{equation}
	Now, using Lemma \ref{MainResultLemma} and the fact that the wavelet-transform is an isometry, we get by (\ref{eq:norme equality})
	\[
	\int\limits _{Spin(n)}\int\limits _{\mathbb{R}^{n}}\int\limits _{\mathbb{R}^{+}}(T_{\psi}\left[f\right](a,\underline{b},s))^{2}\frac{da}{a^{n+1}}dV(\underline{b})ds=\mathcal{A}_{\psi}\left\Vert f\right\Vert _{2}^{2}.
	\]
	The inequality (\ref{uncer trop complique}) becomes
	\begin{equation}
	\left(\intop_{Spin(n)}\intop_{\mathbb{R}^{+}}\left\Vert b_{k}T\left[f\right](a,\cdot,s)\right\Vert _{2}^{2}\frac{da}{a^{n+1}}ds\right)^{\frac{1}{2}}\times\left(\frac{A_{\psi}}{(2\pi)^{n}}\left\Vert \xi_{k}\widehat{f}\right\Vert _{2}^{2}\right)^{\frac{1}{2}}\geq\frac{1}{2}\mathcal{A}_{\psi}\left\Vert f\right\Vert _{2}^{2}.\label{presque uncer}
	\end{equation}
	Hence, we obtain
	\[
	\left(\intop_{Spin(n)}\intop_{\mathbb{R}^{+}}\left\Vert b_{k}T\left[f\right](a,\cdot,s)\right\Vert _{2}^{2}\frac{da}{a^{n+1}}ds\right)^{\frac{1}{2}}\left\Vert \xi_{k}\widehat{f}\right\Vert _{2}\geq\frac{(2\pi)^{\frac{n}{2}}}{2}\sqrt{A_{\psi}}\left\Vert f\right\Vert _{2}^{2}.
	\]
\end{proof}
\section{Conclusion}
In this paper, a new uncertainty principle associated with the continuous
wavelet transform in the Clifford algebra's settings has been formulated
and proved. Starting from the definition of real Clifford algebra
and the real continuous wavelet transform, we defined a continuous Clifford-Wavelet
Transform, presented its proprieties and formulated an uncertainty
relation based on the uncertainty principle for the Clifford-Fourier
transform.


\begin{thebibliography}{00}
	\bibitem{Bahri2011b} M. Bahri, R. Ashino, and R. Vaillancourt. \newblock
	Two-dimensional quaternion wavelet transform. \newblock {\em Applied
		Mathematics and Computation}, Volume 218(1), pages 10--21, 2011.
	
	\bibitem{Brackx2001a} F. Brackx and F. Sommen. \newblock The continuous
	wavelet transform in clifford analysis. \newblock In {\em Clifford Analysis and Its Applications}, pages 9--26. Springer Netherlands, 2001.
	
	\bibitem{Cartan1966} E. Cartan. \newblock {\em The theory of spinors}. \newblock Courier Corporation, 1966.
	
	\bibitem{Clifford1882} W. K. Clifford. \newblock On the classification of geometric algebras. \newblock {\em Mathematical Papers}, pages 397--401, 1882.
	
	\bibitem{DeSchepper2006} N. De Schepper. \newblock {\em Multi-dimensional continuous wavelet transforms and generalized Fourier transforms in Clifford analysis}. \newblock PhD thesis, Ghent University, 2006.
	
	\bibitem{Dirac1928} P. A. M. Dirac. \newblock The quantum theory of the electron. \newblock {\em Proceedings of the Royal Society of London. Series A, Containing Papers of a Mathematical and Physical Character}, Volume 117(778), pages 610--624, 1928.
	
	\bibitem{Fu2015} Y. Fu and L. Li. \newblock Uncertainty principle for multivector-valued functions. \newblock {\em International Journal of Wavelets, Multiresolution and Information Processing}, Volume 13(01), pages 1550005-1- -1550005-8, 2015.
	
	\bibitem{Hamilton1844} W. R. Hamilton. \newblock On a new species of imaginary quantities connected with a theory of quaternions. \newblock In {\em Proceedings of the Royal Irish Academy}, volume 2, pages 424--434, 1844.
	
	\bibitem{Hamilton1866} W. R. Hamilton. \newblock {\em Elements of quaternions}. \newblock Longmans, Green, \& Company, 1866.
	
	\bibitem{Heisenberg1927} W. Heisenberg. \newblock \"Uber den anschaulichen inhalt der quantentheoretischen kinematik und mechanik. \newblock {\em Zeitschrift fur Physik}, Volume 43, pages 172--198, 1927.
	
	\bibitem{Heisenberg1985} W. Heisenberg. \newblock \"Uber den anschaulichen inhalt der quantentheoretischen kinematik und mechanik. \newblock In {\em Original Scientific Papers Wissenschaftliche Originalarbeiten}, pages 478--504, Springer Berlin Heidelberg, 1985.
	
	\bibitem{Jday2018} R. Jday. \newblock Heisenberg's and hardy's uncertainty principles in real clifford algebras. \newblock {\em Integral Transforms and Special Functions}, Volume 29(8), pages 663--677, 2018.
	
	\bibitem{Sommen2015} F. Sommen and H. De Schepper. \newblock {\em Introductory Clifford Analysis}, pages 1--27. , Basel, 2015.
	
	\bibitem{Weyl1950} H. Weyl. \newblock {\em the theory of groups and quantum mechanics}. \newblock Dover, New York, second edition, 1950.
	
	\bibitem{Grossmann1986} A. Grossmann, J. Morlet and T. Paul, {\em Transforms associated to square integrable group representations. II : examples}. \newblock Annales de l'IHP Physique th\'eorique, Volume 45(3), pages 293--309, 1986.
	
	\bibitem{Debnath2015} L. Debnath and F. A. Shah. \newblock {\em Wavelet Transforms and Their Applications}. \newblock Birkh\"auser, Boston, 2015.
	
	\bibitem{Rieder1990} A. Rieder. \newblock {The wavelet transform on Sobolev spaces and its approximation properties}. \newblock {\em Numerische Mathematik}, Volume 58(1), pages 875--894, 1990.
	
	\bibitem{Morlet1982} J. Morlet, G. Arens, E. Fourgeau, and D. Giard. \newblock Wave propagation and sampling theory?part ii: Sampling theory and complex waves. \newblock {\em Geophysics}, Volume 47(2), pages 222--236, 1982.
	
	\bibitem{Grossmann1984} A. Grossmann and J. Morlet. \newblock Decomposition of hardy functions into square integrable wavelets of constant shape. \newblock {\em SIAM Journal on Mathematical Analysis}, Volume 15, pages 723--736, 1984.
	
	\bibitem{Meyer1986} Y. Meyer, Ondelettes et fonctions splines et analyses gradu\'ees. Lectures given at the University of Torino, Italy, 9, 1986.
	%
\end{thebibliography}
\end{document}